\newcommand{\K}{\mathbf{K}}
\newcommand{\W}{\mathbf{W}}
\newcommand{\N}{\mathbf{N}}
\newcommand{\V}{\mathbf{V}}
\newcommand{\tjplus}{{\theta^{j}}^{+}}
\newcommand{\smallmat}[1]{\left[ \begin{smallmatrix}#1 \end{smallmatrix} \right]}
\newtheorem{theorem}{Theorem}
\newtheorem{lemma}[theorem]{Lemma}
\newtheorem{remark}{Remark}
\newtheorem{problem}{Problem}
\newtheorem{proposition}{Proposition}
\title{\LARGE \bf
	Direct Data-Driven Computation of Polytopic Robust Control Invariant Sets and State-Feedback Controllers
}
\author{Manas Mejari, Ankit Gupta
\thanks{This work has been accepted for publication, to appear at the 62nd IEEE Conference on Decision and Control (CDC 2023), Singapore.}
\thanks{M. Mejari is with IDSIA Dalle Molle
	Institute for Artificial Intelligence, Via la Santa 1, CH-6962 Lugano-Viganello, Switzerland. {\tt\small manas.mejari@supsi.ch}}
\thanks{A. Gupta is working with vehicle motion planning at Zenseact AB, Gothenburg, Sweden.
        {\tt\small ankit.gupta@zenseact.com}}%
\thanks{The activities of M. Mejari have been supported by HASLER STIFTUNG under the project \emph{INHALE: Interpretable Neural networks for Hybrid dynamicAL systEms.}
}
}
\begin{document}
\newcounter{tempEquationCounter}
\newcounter{thisEquationNumber}
\newenvironment{floatEq}
{\setcounter{thisEquationNumber}{\value{equation}}\addtocounter{equation}{1}
\begin{figure*}[!t]
\normalsize\setcounter{tempEquationCounter}{\value{equation}}
\setcounter{equation}{\value{thisEquationNumber}}
}
{\setcounter{equation}{\value{tempEquationCounter}}
\hrulefill\vspace*{4pt}
\end{figure*}

}
\newenvironment{floatEq2}
{\setcounter{thisEquationNumber}{\value{equation}}\addtocounter{equation}{1}
\begin{figure*}[!t]
\normalsize\setcounter{tempEquationCounter}{\value{equation}}
\setcounter{equation}{\value{thisEquationNumber}}
}
{\setcounter{equation}{\value{tempEquationCounter}}
\end{figure*}

}

\maketitle
\thispagestyle{empty}
\pagestyle{empty}

\begin{abstract}
This paper presents a direct data-driven approach for computing \emph{robust control invariant} (RCI) sets and their associated state-feedback control laws for linear time-invariant systems affected by bounded disturbances. The proposed method utilizes a single state-input trajectory generated from the  system, to compute a polytopic RCI set with a desired complexity and an invariance-inducing feedback controller, without the need to identify a model of the system. The problem is formulated in terms of a set of sufficient linear matrix inequality  conditions that are then combined in a semi-definite program to maximize the volume of the RCI set while respecting the state and input constraints.
We demonstrate through a numerical case study that the proposed data-driven approach can generate RCI sets that are of comparable size to those obtained by a model-based method in which exact knowledge of the system matrices is assumed.

\end{abstract}

\section{Introduction}\label{sec:introduction}
Ensuring safety is of paramount importance in the operation of feedback-controlled systems in various safety-critical applications such as autonomous driving and aircraft flight control. This can be achieved by guaranteeing that the system respects  state and input safety constraints at all times. Essentially, this requires imposing \emph{invariance} of a set, \emph{i.e.}, the system's states  when initialized within the set will never leave it. Therefore, set invariance theory has received a significant attention over the past few years,  particularly  for constrained systems and stability analysis~\cite{fb99,fb}. 

A set is called \emph{robust control invariant} (RCI) if from all  initial states within the set, an admissible control input exists, which keeps the state trajectories within the set for all bounded disturbances acting on the system~\cite{fb99}. Several contributions have been proposed in the literature to compute RCI sets and its associated controllers given a model of the system,  see for \emph{e.g.}, \cite{gmfp23auto,sr10,mf10, lc15, Tahir2015,tb10}. These are \emph{model-based} methods where the main underlying assumption is that a model of the true system is available. However, there are several challenges to obtain an accurate model of the system~\cite{hou13}. Computing a model from the data requires an additional system identification step, and the identified model may be inaccurate when only a few data samples are available, resulting in a large identification error. An inaccurate model can lead to loss of the invariance property as well as violation of constraints when operating in the closed-loop due to model-mismatch, as demonstrated in~\cite{zhong22}. On the other hand, even if an accurate model is available using first principles, it may be too complex for efficient controller synthesis and RCI set computation.

To overcome the limitations of model-based methods, recent developments have emphasized  data-driven approaches. One such class of methods falls under the category  of \emph{control-oriented identification}, which considers control design and invariance set computation together with model identification. It is shown that concurrent model selection with RCI set computation results in reduced conservatism~\cite{sam22,chen22}. Alternatively, the second category encompasses \emph{direct} data-driven control approaches~\cite{berb20, berb22, bis20, bisoffi23}, which synthesize robust controllers directly from the open-loop data of the system without the need for model identification. In this paper, we focus on the latter approach.

The direct data-driven methods presented in  \cite{bis20,bisoffi23}, compute a state-feedback controller from open-loop data to induce robust invariance in a given polyhedral set. However, these methods require that the set is fixed and known a priori. 
 A recent work \cite{zhong22} offers a  method for computing an invariant set as well as its associated feedback controller. While the  approach proposed in \cite{zhong22} does not require prior knowledge of the set, it constructs an \emph{ellipsoidal} invariant set. It should be noted that ellipsoidal sets are potentially more conservative than \emph{polyhedral} sets, as the latter present several theoretical and practical advantages over the ellipsoids via flexible and arbitrarily complex representation~\cite{fb}, with the only drawback of scalability.  

In this paper, we develop a direct data-driven approach to compute polytopic RCI sets and state-feedback controllers for unknown linear systems subject to bounded disturbances. Our method aims to address the challenge of constructing polytopic RCI sets of desired complexity directly from data. We utilize a single state-input trajectory generated in open-loop and derive data-based sufficient LMI conditions which can guarantee invariance and constraint satisfaction. 
The sufficient LMI conditions are then combined in an SDP program to maximize the volume of the RCI set. 
The proposed approach  can be seen as a data-driven counterpart to the model-based methods presented in~\cite{gkf17}. Specifically, we consider a state transformation in which the candidate RCI set is mapped into a known polytope as in~\cite{gkf17}. The state-input constraints in the transformed space are found to be affine inequalities. Furthermore, a novel data-based condition for invariance is derived using full block S-procedure. The advantage of having flexibility to choose the representational complexity of the RCI set is demonstrated using a numerical example. We point out that in~\cite{gkf17}, it is assumed that the exact model of the true system is known, while the approach presented in this paper neither requires knowledge of the model nor an additional system identification step.

Paper organization: The notation and preliminary results used in the paper are given in Section~\ref{sec:notation}. The problem of computing the RCI set and the controller is formalized in Section~\ref{sec:prob}. The proposed data-based conditions for invariance and constraints are described in Section~\ref{sec:LMI}. A one-step and an iterative algorithm based on these conditions is given in Section~\ref{sec:SDP} to obtain desirably large volume RCI set. Finally, in Section~\ref{sec:example}, the effectiveness of the proposed algorithm is demonstrated with a numerical example. 

\section{Notations and Preliminaries}\label{sec:notation}
The set of real $m \times n$ matrix is denoted by $\mathbb{R}^{m \times n}$ and $\mathbb{D}^n_+\in\mathbb{R}^{n\times n}$  denotes  the set of all diagonal  matrices with positive diagonal entries. An identity matrix of  dimension $n$ is denoted by $I_n$ and $e_i$ represent  and its $i$-th column. A matrix of zeros with appropriate dimension is denoted as $\bm{0}$.
The vector of ones with dimension $m$ is  denoted by $\bm{1}_{m}$.  
 $X \succ 0\,(\succeq 0)$ denotes a
positive (semi) definite matrix $X$. For compactness, in the text $*$'s will represent matrix entries that are uniquely identifiable from symmetry. Let $A\in \mathbb{R}^{m \times n}$ be a matrix written according to its $n$ column vectors as  $A = \smallmat{a_1 \ \cdots \ a_n}$, we define vectorization of $A$ as $\Vec{A} \triangleq \smallmat{a^{\top}_{1} \ \cdots \ a^{\top}_{n}}^{\top}$, which returns a vector of dimension $(mn \times 1)$, stacking the columns of $A$. For a finite set $\Theta_v = \{\theta^{1}, \theta^{2}, \ldots, \theta^{r} \}$ with $\theta^{j} \in \mathbb{R}^{n}$ for $i=1,\ldots, r$, the convex-hull of $\Theta_v$  is given by,
$\mathsf{conv}(\Theta_v) \triangleq \left\{ \theta\in  \mathbb{R}^{n}: \theta = \sum_{j=1}^{r} \alpha_j \theta^{j}, \mathrm{s.t} \ \sum_{j=1}^{r} \alpha_j =1, \alpha_j \in [0,1]  \right\} $.
For matrices $A$ and $B$, $A \otimes B$ denotes their Kronecker product.
The following result will be used in the paper:
\begin{lemma}[Vectorization] \label{lemma:vectorization}
For matrices $A \in \mathbb{R}^{k \times l}$, $B \in \mathbb{R}^{l \times m}$, $C \in \mathbb{R}^{m \times n}$ and $D \in \mathbb{R}^{k \times n}$,  the matrix equation $ABC = D$ is equivalent to (see,~\cite[Ex. $10.18$ Roth's identity]{abadir05}),  
\begin{subequations}
    \begin{align}
&(C^{\top} \otimes A) \vv{B} = \vv{ABC} = \vv{D},  \label{eq:vectorize1}\\
&\vv{ABC} = (C^{\top}B^{\top} \otimes I_{k}) \vv{A} \label{eq:vectorize2}
\end{align}
\end{subequations}
\end{lemma}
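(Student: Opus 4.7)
The plan is to establish the two vectorization identities from column-by-column bookkeeping, and to deduce~\eqref{eq:vectorize2} from~\eqref{eq:vectorize1} via an identity-matrix insertion. Write $B=[b_1\ \cdots\ b_m]$ and $C=[c_1\ \cdots\ c_n]$ in terms of their columns. For the first identity, I would fix $j\in\{1,\ldots,n\}$ and compute the $j$-th column of $ABC$, which equals $AB c_j$. Expanding $c_j=\sum_{i=1}^{m}(c_j)_i e_i$ gives $B c_j=\sum_{i=1}^{m}(c_j)_i b_i$ and hence $AB c_j=\sum_{i=1}^{m}(c_j)_i A b_i$. On the other hand, the Kronecker product admits the block decomposition $c_j^{\top}\otimes A=\bigl[(c_j)_1 A\ \cdots\ (c_j)_m A\bigr]$, and since $\vv{B}$ stacks the columns $b_1,\ldots,b_m$, the direct block multiplication yields $(c_j^{\top}\otimes A)\vv{B}=\sum_{i=1}^{m}(c_j)_i A b_i$, matching the column expression.

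Stacking these column-wise identities vertically for $j=1,\ldots,n$ and using that the rows of $C^{\top}$ are precisely $c_1^{\top},\ldots,c_n^{\top}$ yields $\vv{ABC}=(C^{\top}\otimes A)\vv{B}$, proving~\eqref{eq:vectorize1}; the final equality $\vv{ABC}=\vv{D}$ then follows immediately from $ABC=D$. For~\eqref{eq:vectorize2}, the strategy is to rewrite $ABC=I_k\cdot A\cdot(BC)$ so that $A$ occupies the middle slot of the triple product, and then invoke~\eqref{eq:vectorize1} with the substitutions $A\mapsto I_k$, $B\mapsto A$, $C\mapsto BC$. This gives
\[
\vv{ABC}=\bigl((BC)^{\top}\otimes I_k\bigr)\vv{A}=(C^{\top}B^{\top}\otimes I_k)\vv{A},
\]
using $(BC)^{\top}=C^{\top}B^{\top}$.

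The result is classical (Roth's identity, cited as~\cite[Ex.~10.18]{abadir05}), so there is no real conceptual obstacle; the main step is dimensional bookkeeping. In particular, one must keep track that $A\in\mathbb{R}^{k\times l}$, $B\in\mathbb{R}^{l\times m}$, $C\in\mathbb{R}^{m\times n}$, that $c_j^{\top}\otimes A$ partitions into $m$ horizontal blocks each of width $l$, and that this partitioning is compatible with the way $\vv{B}$ stacks the columns of $B$. Once these alignments are verified, the first identity follows from direct expansion, and the second is a corollary obtained by the $I_k$-insertion trick.
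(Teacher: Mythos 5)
Your argument is correct. Note, however, that the paper offers no proof of this lemma at all: it is stated as a preliminary and dispatched by citation to Roth's identity (\cite[Ex.~10.18]{abadir05}), so there is nothing to compare against. Your column-wise verification of \eqref{eq:vectorize1} is sound: the $j$-th column of $ABC$ is $ABc_j=\sum_{i=1}^{m}(c_j)_iAb_i$, the $j$-th block row of $C^{\top}\otimes A$ is $c_j^{\top}\otimes A=[(c_j)_1A\ \cdots\ (c_j)_mA]$ (a $k\times ml$ block row compatible with the $ml\times 1$ vector $\vv{B}$), and stacking over $j$ gives exactly $\vv{ABC}=(C^{\top}\otimes A)\vv{B}=\vv{D}$. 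The derivation of \eqref{eq:vectorize2} by inserting $I_k$ and applying \eqref{eq:vectorize1} to the triple $I_k\cdot A\cdot(BC)$, together with $(BC)^{\top}=C^{\top}B^{\top}$, is a clean and standard corollary; the dimensions ($\,(BC)^{\top}\otimes I_k\in\mathbb{R}^{nk\times lk}$ acting on $\vv{A}\in\mathbb{R}^{kl}$) check out. This is precisely the form of the identity the paper later uses to vectorize the feasible-model set in \eqref{eq:Feasible model set vector} and the closed-loop dynamics in \eqref{eq:G_dynamics}, so your self-contained proof is a legitimate substitute for the citation.
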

For a better readability, all unknown matrix variables to be computed will be written in a boldface font, \emph{e.g.}, $\W, \K$ etc.  

\section{Problem Setting}\label{sec:prob}

\subsection{Data-generating system and constraints}
We consider the following discrete-time data-generating system 
\begin{equation}\label{eq:system}
     x(k+1)=Ax(k)\!+\!Bu(k)\!+\!w(k),
\end{equation}
where  $x(k) \in \mathbb{R}^{n}$, $u(k) \in \mathbb{R}^{m}$ and $w(k) \in \mathbb{R}^{n}$ are the state, control input and the (additive) disturbance vectors at time $k$, respectively. The system matrices $A \in  \mathbb{R}^{n\times n}$ and $B \in \mathbb{R}^{n\times m}$ are \emph{unknown}. 
A state-input trajectory of $T+1$ samples $\{x(k), u(k)\}_{k=1}^{T+1}$ is generated from system \eqref{eq:system}. The generated data is represented with the following matrices,
\begin{subequations}\label{eq:data}
\begin{align}
    X^{+} &\triangleq [x(2) \quad  x(3) \quad \cdots \quad x(T+1)] \in \mathbb{R}^{n\times T}, \\
    X &\triangleq [x(1) \quad  x(2) \quad \cdots \quad x(T)] \in \mathbb{R}^{n\times T}, \\
    U &\triangleq [u(1) \quad  u(2) \quad \cdots \quad u(T)] \in \mathbb{R}^{m\times T}. 
\end{align}
\end{subequations}

The system \eqref{eq:system} is subject to the following  state, input and disturbance constraints, respectively:
\begin{subequations}\label{eq:constr}
    \begin{align}
\mathcal{X} \triangleq&\left \{  x: Hx\leq \mathbf{1}_{n_x} \right \},\;\;\;\;\;\;\\ 
\mathcal{U} \triangleq&\left \{  u: Gu\leq \mathbf{1}_{n_u} \right \},\;\;\;\;\;\;\;\\ 
\mathcal{W} \triangleq&\left \{  w: -\mathbf{1}_{n_w} \leq Dw\leq \mathbf{1}_{n_w}\right \},\;\;\;
\end{align}
\end{subequations}
where $H \in \mathbb{R}^{n_{x}\times n}$, $G \in \mathbb{R}^{n_{u}\times m}$ and $D \in \mathbb{R}^{n_{w}\times n}$ are given matrices. Note that, the generated state samples in \eqref{eq:data} are \emph{noisy} which are affected by bounded but \emph{unknown} disturbance $w(k) \in \mathcal{W}$ for $k = 1, \ldots, T+1$.

\subsection{Feasible model set and `informative' data}
 We characterize a set of \emph{feasible models} which are compatible with the measured data $X^{+}, X, U$  and the bound on the disturbance samples captured
by the set $\mathcal{W}$, defined as follows
\begin{equation}\label{eq:feasible model set}
\begin{split}
   \mathcal{M} \triangleq &\left \{M  \in \mathbb{R}^{n\times (n+m)}:   \right.\\
&\left.x(k+1) - M \smallmat{x(k) \\ u(k)} \in \mathcal{W},  k=1, \ldots, T   \right \}. 
\end{split}
\end{equation}

Using the definitions of data matrices in \eqref{eq:data} and disturbance set $\mathcal{W}$ in \eqref{eq:constr}, the feasible model set $\mathcal{M} $ is represented as,
\begin{equation}\label{eq:feasible model set 1}
   \mathcal{M} \triangleq \left \{M \in \mathbb{R}^{n\times (n+m)}:   
 -\bar{\mathbf{1}} \leq D X^{+} - D M \smallmat{X \\ U} \leq \bar{\mathbf{1}}  \right \},
\end{equation}
with $\bar{\mathbf{1}} \triangleq \smallmat{\mathbf{1}_{n_w} \ \mathbf{1}_{n_w} \cdots \ \mathbf{1}_{n_w}} \in \mathbb{R}^{n_w\times T}$.

\begin{proposition}[Bounded feasible model set~\cite{bisoffi23,zhong22}]\label{prop:rich_data}
The feasible model set $\mathcal{M}$ in \eqref{eq:feasible model set 1} is a bounded polyhedron if and only if $\mathrm{rank}\left(\smallmat{X \\U} \right) = n+m$ and $D$ has a full column rank. 
\end{proposition}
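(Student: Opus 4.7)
The plan is to characterize the recession cone of $\mathcal{M}$ and show that it is trivial precisely under the two stated rank conditions, since $\mathcal{M}$ is already a (possibly unbounded) polyhedron by construction, and a non-empty polyhedron is bounded if and only if its recession cone reduces to $\{0\}$.

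First I would rewrite the defining inequality of $\mathcal{M}$ in \eqref{eq:feasible model set 1} as the pair $D M Z - D X^{+} \leq \bar{\mathbf{1}}$ and $-(D M Z - D X^{+}) \leq \bar{\mathbf{1}}$, where $Z \triangleq \smallmat{X \\ U} \in \mathbb{R}^{(n+m)\times T}$. The recession cone of this polyhedron consists of all $\Delta \in \mathbb{R}^{n \times (n+m)}$ satisfying both $D \Delta Z \leq \bm{0}$ and $-D\Delta Z \leq \bm{0}$, which is equivalent to the single linear equation $D\Delta Z = \bm{0}$.

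Next I would vectorize this equation using Lemma~\ref{lemma:vectorization} to obtain the equivalent condition $(Z^{\top} \otimes D)\,\vv{\Delta} = \bm{0}$. Boundedness of $\mathcal{M}$ is therefore equivalent to the matrix $Z^{\top} \otimes D \in \mathbb{R}^{(T n_w)\times (n(n+m))}$ having trivial kernel, i.e., full column rank equal to $n(n+m)$. Using the standard Kronecker identity $\mathrm{rank}(Z^{\top}\otimes D) = \mathrm{rank}(Z)\,\mathrm{rank}(D)$, together with the bounds $\mathrm{rank}(Z)\leq n+m$ and $\mathrm{rank}(D)\leq n$, the product attains the value $n(n+m)$ if and only if both factors are simultaneously maximal, namely $\mathrm{rank}(Z)=n+m$ and $\mathrm{rank}(D)=n$. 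This yields the desired equivalence in both directions.

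The only mildly subtle point I anticipate is the equivalence between ``$D\Delta Z = 0$ has only the trivial solution $\Delta = 0$'' and the Kronecker rank statement; this is the step that invokes Lemma~\ref{lemma:vectorization} to translate a matrix equation with a three-factor product into a standard linear system on $\vv{\Delta}$. Beyond that, the argument is essentially a textbook application of the recession-cone characterization of polyhedral boundedness and of the multiplicativity of rank under Kronecker products, so no significant technical obstacle is expected.
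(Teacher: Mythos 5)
Your argument is correct, but note that the paper itself does not prove Proposition~\ref{prop:rich_data}: it is stated as a known result and attributed to the cited references, so there is no in-paper proof to compare against. Your route --- identifying the recession cone of the polyhedron $\mathcal{M}$ as $\{\Delta : D\Delta \smallmat{X\\U} = \bm{0}\}$, vectorizing via Lemma~\ref{lemma:vectorization} to get the kernel condition on $\smallmat{X\\U}^{\top}\!\otimes D$, and then using $\mathrm{rank}(A\otimes B)=\mathrm{rank}(A)\,\mathrm{rank}(B)$ --- is the standard one used in the cited works and is complete. Two small points you should make explicit. First, the equivalence ``bounded $\Leftrightarrow$ trivial recession cone'' requires $\mathcal{M}\neq\emptyset$ (an empty polyhedron is bounded regardless of the rank conditions); here nonemptiness is guaranteed by the data-generating assumption, since the true $[A\ B]$ satisfies $x(k+1)-[A\ B]\smallmat{x(k)\\u(k)}=w(k)\in\mathcal{W}$ and hence lies in $\mathcal{M}$, so you should state that you are using this. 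Second, a purely cosmetic issue: you reuse the symbol $Z$ for $\smallmat{X\\U}$, whereas the paper reserves $Z$ for $\smallmat{X\\U}^{\top}\!\otimes D$ in \eqref{eq:Zd}; pick a different letter to avoid a clash if this text is merged into the paper.
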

The above proposition relates to the ``informative" data and \emph{persistency of excitation} conditions~\cite{tesi20}. The full row rank of $\smallmat{X \\U}$ can be checked from the data, if this condition is not satisfied, the set $\mathcal{M}$ is unbounded which makes it difficult to find a feasible controller and RCI set.  

\subsection{RCI set definition and invariance inducing controller}
Let us consider a static state feedback control law
\begin{equation}\label{eq:State Feedback}
u(k)=\K x(k),
\end{equation}
where $\K \in\mathbb{R}^{m\times n}$ is a feedback gain matrix. The resulting closed-loop dynamics  for a feasible model $M \in \mathcal{M}$ (using \eqref{eq:feasible model set} and \eqref{eq:State Feedback}) is
\begin{equation}\label{eq:Controlled System}
x^+ = M \smallmat{I \\ \K}x+w,
\end{equation}
where the $k$ dependence is dropped and $x(k+1)$ is denoted as $x^+$ for convenience. 

Let us consider the following polytopic set\footnote{In the definition of $\mathcal{C}$, we have assumed that $\W$ is invertible, which would be later guaranteed by the LMI conditions for invariance. The matrix $P$ is a-priori chosen by the user.  The selection of $P$ is discussed in details in~\cite{ag20}.}  \vspace{-0.2cm}
\begin{equation}\label{eq:invset1}
\mathcal{C} \triangleq\left \{ x\in\mathbb{R}^{n} : -\mathbf{1}_{n_p}\leq P\W^{-1}x \leq \mathbf{1}_{n_p} \right \},
\end{equation}
where $P \in \mathbb{R}^{n_p\times n}$, $\W \in \mathbb{R}^{n\times n}$.

The set~$\mathcal{C}$ is referred to as \emph{robustly invariant} for the system~\eqref{eq:Controlled System}, if the following condition is satisfied:\vspace{-0.2cm}
\begin{equation}
\label{eq:Invariance Condition}
x\in\mathcal{C}  \;\Rightarrow\; x^+\in\mathcal{C},\; \forall w\in\mathcal{W}, \; \; \forall M \in \mathcal{M}.
\end{equation}

\begin{remark}
Note that, the invariance condition \eqref{eq:Invariance Condition} has to be satisfied \emph{for all} feasible models $M \in \mathcal{M}$, compatible with data. This can be seen  as a data-based counterpart of the  \emph{model-based} invariance condition given by
$x\in\mathcal{C}  \;\Rightarrow\; (A+B \K)x + w \in\mathcal{C},\; \forall w\in\mathcal{W},$ wherein exact knowledge of the system matrices $A, B$ is assumed. In other words,  the data-based invariance condition \eqref{eq:Invariance Condition} aims at designing $K$ robustly for the set $\mathcal{M}$ induced due to the lack of model knowledge caused by the disturbances $w(k)$ and \emph{finite} data samples. This condition is also the main difference w.r.t. the approach presented in \cite{sam22}, wherein a single feasible model is sought while using a model-dependent disturbance set to take into account finiteness of the data.  
\end{remark}

The set $\mathcal{C}$ has to satisfy the state and input constraints, this implies $\mathcal{C}\subseteq\mathcal{X}$ and $K\mathcal{C}\subseteq\mathcal{U}$, which can be further expressed as\vspace{-0.2cm}
\begin{align}\label{eq:State Constraint}
x\in\mathcal{C} \;&\Rightarrow\; x\in\mathcal{X},\\\label{eq:Control Input Constraint} 
x\in\mathcal{C} \;&\Rightarrow\; u=\mathbf{K}x\in\mathcal{U}.
\end{align}
 
The problem considered in this paper is formalized as follows:

\begin{problem}\label{prob:Problem Formulation}
Given data matrices $(X^+, X,U)$ defined in (\ref{eq:data}), the  constraints sets (\ref{eq:constr}) and a fixed matrix $P$, find the matrix $\W$ defining the invariant set  $\mathcal{C}$  in \eqref{eq:invset1} and a feedback controller gain $\K$ such that: 
\begin{enumerate}
\item The invariance condition (\ref{eq:Invariance Condition}) holds;
\item All elements of the set $\mathcal{C}$  satisfy the state and input constraints (\ref{eq:State Constraint}) and (\ref{eq:Control Input Constraint}), respectively.
\end{enumerate}
\end{problem}\vspace{0cm}

We aim at maximizing the volume of set $\mathcal{C}$ solving Problem~\ref{prob:Problem Formulation}. 



\section{Tractable formulations for System Constraints and Invariance Condition}\label{sec:LMI}

In this section, we present a convenient coordinate transformation~\cite{gkf17} such that state and control input constraints (\ref{eq:State Constraint})-(\ref{eq:Control Input Constraint}) are expressed as affine inequalities, while the invariance condition  (\ref{eq:Invariance Condition})  is expressed as a set of LMIs. Thus, a solution to Problem~\ref{prob:Problem Formulation} is given in the form of an LMI feasibility problem. The volume maximization of the invariant set is then performed via a semi-definite programming problem.

\subsection{System constraints}
Let us consider the following state transformation 
\vspace{-0.1cm}\begin{equation}\label{eq:trans}
\theta = W^{-1}x \Leftrightarrow x=W\theta.\vspace{-0.1cm}
\end{equation}
 This allows us to express the set $\mathcal{C}$ in \eqref{eq:invset1} as
\vspace{-0.1cm}\begin{equation}
\label{eq:ninvset}
\mathcal{C} \triangleq \left \{ \W \theta \in \mathbb{R}^n:\theta \in \Theta \right \},\vspace{-0.1cm}
\end{equation}
where $\Theta$ is a symmetric set defined as follows:
\vspace{-0.1cm}\begin{equation}
\label{eq:Theta}
\Theta\triangleq\left \{ \theta\in\mathbb{R}^{n}: -\mathbf{1}_{n_p} \leq P\theta \leq \mathbf{1}_{n_p} \right \}.\vspace{-0.1cm}
\end{equation}
Note that in the $\theta$-state-space, the candidate invariant set $\Theta$ is a \emph{known} symmetric set  around the origin. The corresponding polytopic set $\mathcal{C}$ in the $x$-state-space will be completely determined by the choice of $\W$, which we aim to compute.

As $P$ is a known matrix, the symmetric set $\Theta$ can be  expressed as the convex hull of the finitely many \emph{known} vertices $\left\{\theta^1,\ldots,\theta^{2\sigma }\right\}$:
\begin{equation}
\label{eq:Theta - Convex Hull}
\Theta=\mathsf{conv}\left(\left\{\theta^1,\ldots,\theta^{2\sigma }\right\}\right),
\end{equation}
where $\sigma$ is some known positive integer determined by the choice of $P$. 
  We now express the state and input inequality constraints  \eqref{eq:constr} in the $\theta$-state-space by using the transformation \eqref{eq:trans}. 
  Satisfaction of these inequalities constraints at the vertices $\{\theta^j\}_{j=1}^{2\sigma}$ ensures that they are satisfied over the whole set $\Theta$ as well. Therefore, we can write the state constraints \eqref{eq:State Constraint} in terms of $\W$ as follows:
\vspace{-0.1cm}\begin{equation}\label{eq:Tractable State Constraints}
H\W\theta\leq \mathbf{1}_{n_x},\forall\theta\in\Theta \;\Leftrightarrow\; H\W\theta^j\leq \mathbf{1}_{n_x},\;j\!=\!1,\ldots,2\sigma.\vspace{-0.1cm}
\end{equation}
In order to express the control input constraints in terms of $\W$, let us consider a new matrix variable as follows:
\begin{equation}
\label{eq:N}
\N \triangleq \K\W \;\;\Leftrightarrow\;\; \K = \N\W^{-1}.\vspace{-0.1cm}
\end{equation}
The control input constraints in \eqref{eq:Control Input Constraint} are then given as 
\begin{equation}\label{eq:Tractable Control Input Constraints}
G \N \theta\leq \mathbf{1}_{n_u},\forall\theta\in\Theta \;\Leftrightarrow\; G \N \theta^j\leq \mathbf{1}_{n_u},\;j=1,\ldots,2\sigma.
\end{equation}
The system constraints (\ref{eq:Tractable State Constraints}) and (\ref{eq:Tractable Control Input Constraints}) are affine and are identified by $n_x\times2\sigma$ and $n_u\times2\sigma$ scalar inequalities. These constraints are exact, in contrast to the relaxation used in \cite{lc15}. Hence, no conservatism is introduced. 
Note that, if the sets~$\mathcal{X}$ and~$\mathcal{U}$ are symmetric, half of the constraints in~(\ref{eq:Tractable State Constraints}) and~(\ref{eq:Tractable Control Input Constraints}) are redundant and hence removable. This is a consequence of the symmetry of the set $\Theta$, which allows arranging~$\theta^j$'s in a way to have~$\theta^{j+\sigma}=-\theta^j$ for $j=1,\ldots,\sigma$.

\subsection{Invariance conditions in the transformed state-space}
Before we state the condition for invariance, let us express the system dynamics  in the $\theta$-state-space. Using \eqref{eq:trans},  the closed-loop dynamics \eqref{eq:Controlled System} can be written as 
\begin{equation}\label{eq:closeloop}
\W \theta^{+}=M \begin{bmatrix}
\W\\ \N 
\end{bmatrix} \theta + w,
\end{equation}
for a feasible model $M \in \mathcal{M}$ and $w \in \mathcal{W}$. 

We now state two equivalent invariance conditions in the $\theta$ state-space based on the closed-loop dynamics \eqref{eq:closeloop}.  

\begin{lemma}
If the set $\Theta$ in \eqref{eq:Theta} is robustly invariant for system \eqref{eq:closeloop} then the following two statements are equivalent:
\begin{enumerate}
    \item[(i)]  for all $\theta \in \Theta$, $\forall (w, M) \in (\mathcal{W}, \mathcal{M})$,
    \begin{equation}\label{eq:inv_cond1}
        \theta^{+} = \left( \W^{-1} M \begin{bmatrix}
\W\\ \N 
\end{bmatrix} \theta + \W^{-1}w \right) \in \Theta
    \end{equation}
\item[(ii)] for each vertex $\theta^{j}, j = 1, \ldots, 2\sigma$ of the set $\Theta$,  and $\forall(w, M) \in (\mathcal{W}, \mathcal{M})$,  
\begin{equation}\label{eq:inv_cond2}
\tjplus = \left( \W^{-1} M \begin{bmatrix}
\W\\ \N 
\end{bmatrix} \theta^{j} + \W^{-1}w \right) \in \Theta
\end{equation}
\end{enumerate}
\end{lemma}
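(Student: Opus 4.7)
The plan is to prove the two implications separately. The direction $(i) \Rightarrow (ii)$ is immediate: since each vertex $\theta^{j}$ belongs to $\Theta$ (being a generator of the convex hull in~\eqref{eq:Theta - Convex Hull}), applying the universal condition in $(i)$ to $\theta = \theta^{j}$ yields $(ii)$ without any further argument.

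The substantive direction is $(ii) \Rightarrow (i)$. The key observation I would exploit is that, for a fixed pair $(w,M) \in (\mathcal{W},\mathcal{M})$, the successor map $\theta \mapsto \theta^{+}$ defined by~\eqref{eq:inv_cond1} is \emph{affine} in $\theta$. Take an arbitrary $\theta \in \Theta$ and, using~\eqref{eq:Theta - Convex Hull}, expand it as a convex combination $\theta = \sum_{j=1}^{2\sigma} \alpha_{j}\theta^{j}$ with $\alpha_{j}\geq 0$ and $\sum_{j}\alpha_{j}=1$. Substituting into~\eqref{eq:inv_cond1} and exploiting linearity of $\W^{-1}M\smallmat{\W\\ \N}$ together with $\sum_{j}\alpha_{j}=1$ (so that the additive term $\W^{-1}w$ can be pulled inside the sum), I can write
\begin{equation*}
\theta^{+} \;=\; \sum_{j=1}^{2\sigma} \alpha_{j}\!\left(\W^{-1}M\begin{bmatrix}\W\\ \N\end{bmatrix}\!\theta^{j} + \W^{-1}w\right) \;=\; \sum_{j=1}^{2\sigma}\alpha_{j}\,\tjplus,
\end{equation*}
where each $\tjplus$ is the successor of the vertex $\theta^{j}$ under the \emph{same} $(w,M)$.

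By hypothesis $(ii)$, $\tjplus \in \Theta$ for every $j$, and since $\Theta$ in~\eqref{eq:Theta} is a polytope (hence convex), the convex combination $\sum_{j}\alpha_{j}\tjplus$ lies in $\Theta$ as well, proving $\theta^{+}\in\Theta$. Because $(w,M)$ were arbitrary in $(\mathcal{W},\mathcal{M})$ and $\theta$ arbitrary in $\Theta$, condition $(i)$ follows.

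I do not anticipate any real obstacle: the argument is essentially ``affine maps commute with convex combinations, and convex sets are closed under them.'' The only point one has to handle carefully is that the disturbance $w$ and the feasible model $M$ must be kept \emph{fixed} when passing from the decomposition of $\theta$ to the successor states $\tjplus$; this is exactly what allows the $\W^{-1}w$ term to be absorbed into the convex combination via $\sum_{j}\alpha_{j}=1$, and it matches the universal quantifier ``$\forall (w,M)$'' appearing in both statements.
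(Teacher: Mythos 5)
Your proposal is correct and follows essentially the same route as the paper's proof: the forward implication is immediate since each vertex lies in $\Theta$, and the converse is obtained by writing $\theta$ as a convex combination of the vertices, absorbing the $\W^{-1}w$ term via $\sum_j \alpha_j = 1$ so that $\theta^{+} = \sum_j \alpha_j \tjplus$, and invoking convexity of $\Theta$. Your explicit remark that $(w,M)$ must be held fixed throughout the decomposition is a point the paper leaves implicit, but otherwise the arguments coincide.
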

\vspace{0.1cm}
\begin{proof}
 Since for each vertex $\theta^{j}$, it holds that $\theta^{j} \in \Theta$,  it can be easily seen that $(i) \Rightarrow (ii)$. Let us now prove the converse statement, \emph{i.e.},  $(ii) \Rightarrow (i)$.  
From \eqref{eq:Theta - Convex Hull}, any $\theta \in \Theta$ can be expressed as a convex combination of the vertices, 
 \begin{equation}
     \theta = \sum_{j=1}^{2\sigma} \alpha_j \theta^{j}, \quad \sum_{j=1}^{2\sigma} \alpha_j = 1, \quad \alpha_j \in [0,1].
 \end{equation}
Then, based on the closed-loop dynamics \eqref{eq:closeloop} we get,
\begin{align}\label{eq:cmb}
    \theta^{+} &= \W^{-1}M \begin{bmatrix}
\W\\ \N 
\end{bmatrix}\left(\sum_{j=1}^{2\sigma} \alpha_j \theta^{j}\right) + \W^{-1}w \nonumber \\
&= \left(\sum_{j=1}^{2\sigma} \alpha_j \W^{-1}M \begin{bmatrix}
\W\\ \N 
\end{bmatrix}\theta^{j}\right) + \underbrace{\left(\sum_{j=1}^{2\sigma}\alpha_j\right)}_{1} \W^{-1}w \nonumber \\
&= \sum_{j=1}^{2\sigma} \alpha_j \underbrace{ \left(\W^{-1} M \begin{bmatrix}
\W\\ \N 
\end{bmatrix} \theta^{j} + \W^{-1}w \right)}_{(\theta^j)^{+}}
\end{align}
We know that $\tjplus \in \Theta, \forall(w, M) \in (\mathcal{W}, \mathcal{M})$ according to \eqref{eq:inv_cond2}. Since $\theta^{+}$ in \eqref{eq:cmb} is obtained as a convex combination of $\tjplus$ and as the set $\Theta$ is convex, it necessarily follows that $\theta^{+} \in \Theta \ \forall(w, M) \in (\mathcal{W}, \mathcal{M})$, thus proving $(ii) \Rightarrow (i)$.
\end{proof}
In the rest of the paper, we will consider  condition \eqref{eq:inv_cond2} for robust invariance of the set $\Theta$. Note that, eq.~\eqref{eq:inv_cond2} allows us to enforce the invariance condition only at a finite set of known vertices, instead of enforcing it for all $\theta \in \Theta$.

\subsection{Data-based LMI condition for invariance}

We will now state and prove a data-based sufficient condition to render the set $\Theta$ invariant with an associated state-feedback controller. Recall that  a $T+1$-length state-input trajectory $\{x(k), u(k)\}_{k=1}^{T+1}$  generated from system \eqref{eq:system} is available. The data is arranged in the form of matrices $(X^+, X,U)$ as in \eqref{eq:data}. Let us first define the following matrix and a vector, which are constructed from the given state-input data and a known disturbance set matrix $D$ in \eqref{eq:constr}.
\begin{subequations}\label{eq:Zd}
    \begin{align}
Z &\triangleq \left( \begin{bmatrix}
    X \\ U
\end{bmatrix}^{\top} \otimes D \right)  \in \mathbb{R}^{Tn_{w} \times n(n+m)}\\ 
  d &\triangleq \begin{bmatrix}
      Dx(2) \\ Dx(3) \\ \vdots \\ Dx(T+1)
  \end{bmatrix} \in \mathbb{R}^{Tn_w}
\end{align}
\end{subequations}

The following theorem states the data-based sufficient LMI feasibility condition for invariance and control.  

\begin{theorem}[Data-based LMI for invariance]\label{thm:data-based invariance}
Given data matrices $(X^+, X,U)$ and a fixed matrix $P \in \mathbb{R}^{n_p \times n}$, if there exists  $\W \in \mathbb{R}^{n \times n}$, $\N \in \mathbb{R}^{m \times n}$, and the variables 
$\{\bm{\phi}_{ij} \in \mathbb{R}_{+},  \bm{\Lambda}_{ij} \in \mathbb{D}^{Tn_w}_{+}, \bm{\Gamma}_{ij} \in \mathbb{D}^{n_w}_{+}\}$ that satisfy the following LMIs for $i=1,\ldots, n_p$ and $j=1,\ldots, 2\sigma$,
\begin{equation}\label{eq:DB invariance condition}
    \begin{bmatrix}\!
  \bm{r}_{ij} & -d^{\!\top}\bm{\Lambda}_{ij}Z & \bm{0}  & \bm{0}\\ 
* &  Z^{\top}\bm{\Lambda}_{ij}Z   &\bm{0} & \mathcal{G}^{\top}\!\!\left(\W, \N, \theta^{j}\right)\\ 
* & * & D^{\top} \bm{\Gamma}_{ij} D \!& I_n\\
*&*&*&\W\!\!+\!\!\W^{\top} \!\!-\!\! \bm{\phi}_{i,j} P^{\!\top}e_ie^{\!\top}_i P
\!\end{bmatrix}{\succcurlyeq} 0,
\end{equation}
where,
\begin{subequations}
 \begin{align}
    &\bm{r}_{ij} \triangleq \bm{\phi}_{ij}\!-\!\mathbf{1}^{\top}\!\bm{\Lambda}_{ij}\mathbf{1}-\!\mathbf{1}_{n_w}^{\top}\!\bm{\Gamma}_{ij}\mathbf{1}_{n_w} +d^{\top}\bm{\Lambda}_{ij}d \ \ \in \mathbb{R}, \label{eq:rij} \\
     &\mathcal{G}\left(\W, \N, \theta^{j}\right) \triangleq  \left(\begin{bmatrix}
        \W \\ \N
    \end{bmatrix} \theta^{j}\right)^{\top} \!\!\otimes\! I_n  \ \ \in \mathbb{R}^{n \times n(n+m)}, \label{eq:G}
\end{align}   
\end{subequations}
then, the state feedback controller gain is obtained as $\K = \N \W^{-1}$ which renders the set $\mathcal{C}$ in \eqref{eq:ninvset} robust invariant.  \end{theorem}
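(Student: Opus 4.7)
The plan is to translate the robust-invariance condition \eqref{eq:inv_cond2} into a single S-procedure-style quadratic inequality in the data, and then to recognise \eqref{eq:DB invariance condition} as its LMI reformulation. First I would vectorize the $\theta$-space dynamics $\W \tjplus = M \smallmat{\W \\ \N} \theta^j + w$ using Lemma \ref{lemma:vectorization}: with $m_v \triangleq \vv{M}$, Roth's identity gives $M \smallmat{\W \\ \N} \theta^j = \mathcal{G}(\W, \N, \theta^j)\,m_v$, while the data-compatibility condition $M \in \mathcal{M}$ becomes the box constraint $-\mathbf{1}_{Tn_w} \leq Z m_v - d \leq \mathbf{1}_{Tn_w}$. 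Since this and the disturbance constraint $-\mathbf{1}_{n_w} \leq Dw \leq \mathbf{1}_{n_w}$ are boxes, every diagonal PSD multiplier $\bm{\Lambda}_{ij}$ and $\bm{\Gamma}_{ij}$ yields the quadratic envelopes $(Z m_v - d)^{\top} \bm{\Lambda}_{ij}(Z m_v - d) \leq \mathbf{1}^{\top} \bm{\Lambda}_{ij} \mathbf{1}$ and $(Dw)^{\top} \bm{\Gamma}_{ij}(Dw) \leq \mathbf{1}_{n_w}^{\top} \bm{\Gamma}_{ij} \mathbf{1}_{n_w}$, which is the standard full-block S-procedure encoding of the feasible set.

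Second, I would observe that invariance at vertex $\theta^{j}$ along the $i$-th row of $P$ is $\bm{\phi}_{ij}\bigl(1 - (e_i^{\top} P \tjplus)^2\bigr) \geq 0$ for every such $(m_v, w)$. The obstacle in writing this directly as an LMI in $\W, \N$ is the $\W^{-1}$ that enters $\tjplus = \W^{-1}(\mathcal{G} m_v + w)$. The device employed by \eqref{eq:DB invariance condition} is to introduce an auxiliary slack $\xi$ coupled to $\mathcal{G} m_v + w$ through the $(2,4)$ and $(3,4)$ blocks, while the $(4,4)$ block $\W + \W^{\top} - \bm{\phi}_{ij} P^{\top} e_i e_i^{\top} P$ supplies its quadratic term. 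Testing the LMI with the particular vector $v = \bigl(1,\,m_v,\,w,\,-\W^{-1}(\mathcal{G} m_v + w)\bigr)$ should reproduce the S-procedure inequality above.

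Third, I would expand $v^{\top} L v$ block by block. The scalar, $m_v$, and $w$ blocks combine with $\bm{r}_{ij}$ from \eqref{eq:rij} to give $\bm{\phi}_{ij} + \bigl[(Zm_v-d)^{\top}\bm{\Lambda}_{ij}(Zm_v-d)-\mathbf{1}^{\top}\bm{\Lambda}_{ij}\mathbf{1}\bigr]+\bigl[(Dw)^{\top}\bm{\Gamma}_{ij}(Dw)-\mathbf{1}_{n_w}^{\top}\bm{\Gamma}_{ij}\mathbf{1}_{n_w}\bigr]$. The $\xi$-contribution is $\xi^{\top}(\W+\W^{\top})\xi - \bm{\phi}_{ij}(e_i^{\top}P\xi)^2 + 2\xi^{\top}(\mathcal{G} m_v + w)$. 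The key algebraic identity is that with $\W\xi = -(\mathcal{G} m_v + w)$ one has $2\xi^{\top}(\mathcal{G} m_v + w) = -2\xi^{\top}\W\xi = -\xi^{\top}(\W+\W^{\top})\xi$, so the two $\W$-dependent terms cancel and only $-\bm{\phi}_{ij}(e_i^{\top}P\tjplus)^2$ remains. Combining this with $v^{\top}L v \geq 0$ and the two nonpositive feasibility slacks yields $\bm{\phi}_{ij}\bigl(1-(e_i^{\top}P\tjplus)^2\bigr) \geq 0$; as $\bm{\phi}_{ij}>0$, this forces $|e_i^{\top}P\tjplus| \leq 1$. Running $i$ over $1,\ldots,n_p$ and $j$ over $1,\ldots,2\sigma$ and invoking the preceding equivalence lemma delivers robust invariance of $\mathcal{C}$ under $u = \N\W^{-1}x$.

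The main obstacle is guessing the correct test vector $\xi = -\W^{-1}(\mathcal{G} m_v + w)$; it is precisely what makes the $(4,4)$ block's $\W+\W^{\top}$ term telescope against the cross-blocks and expose the rank-one invariance term. A minor subtlety is ensuring $\W$ is invertible so that $\xi$ and $\mathcal{C}$ itself are well defined: the $(4,4)$ block's positivity yields $\W+\W^{\top} \succeq \bm{\phi}_{ij}P^{\top}e_ie_i^{\top}P$ for each $i$, and if $P$ has full column rank (the natural condition for $\Theta$ in \eqref{eq:Theta} to be bounded) summing over $i$ gives $\W+\W^{\top}\succ 0$, hence $\W$ invertible.
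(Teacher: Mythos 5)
Your proposal is correct and follows essentially the same route as the paper: vectorize the dynamics and the feasible-model set via Lemma~\ref{lemma:vectorization}, apply the S-procedure with diagonal multipliers $\bm{\Lambda}_{ij},\bm{\Gamma}_{ij}$ for the two box constraints and the scalar $\bm{\phi}_{ij}$ for the invariance inequality, and read off \eqref{eq:DB invariance condition} as the resulting quadratic form evaluated at $\varkappa=\bigl(1,\vv{M}^{\top},w^{\top},-(\tjplus)^{\top}\bigr)^{\top}$, where the cross blocks telescope against $\W+\W^{\top}$ exactly as you describe. Your closing remark on the invertibility of $\W$ is a welcome addition the paper only handles in a footnote.
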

\begin{proof}
We first rewrite the feasible model set $\mathcal{M}$ in \eqref{eq:feasible model set 1} using the vectorization Lemma~\ref{lemma:vectorization} as follows,
\begin{align}\label{eq:Feasible model set vector}
    \mathcal{M} &\triangleq \left\{\vv{M} \in \mathbb{R}^{n(n+m)}:  \right. \nonumber \\
  & \left. \smallmat{\!-\mathbf{1}_{n_w} \!+\! Dx(2) \\ \vdots \\ -\mathbf{1}_{n_w}\!+\!Dx(T\!+\!1) \!}
   \! \! \leq \!\!
    \left( \begin{bmatrix}
    X \\ U 
\end{bmatrix}^{\!\!\top} \!\! \otimes \! D \right) \vv{M}\! \leq \! \smallmat{\!\mathbf{1}_{n_w} \!+\! Dx(2)  \\ \vdots \\ \mathbf{1}_{n_w}+Dx(T\!+\!1) \!}
      \right\}  \nonumber  \\
    &\triangleq \left\{\vv{M} \in \mathbb{R}^{n(n+m)}: -\mathbf{1}_{Tn_w} \!+\!d  \leq  Z \vv{M} \leq \!\mathbf{1}_{Tn_{w}}\!+\!d \right\},
\end{align}
where we have used the identity~\eqref{eq:vectorize1} to rewrite the inequalities  in \eqref{eq:feasible model set 1} in a vector form and substituted  $Z, d$ as defined in \eqref{eq:Zd}. Recall that $\vv{M}$ denotes the vectorization of model matrix $M$ obtained by stacking its column vectors. 

Similarly, using the identity \eqref{eq:vectorize2}, we rewrite the  closed-loop dynamics \eqref{eq:closeloop} at the vertex $\theta^{j}$  as follows,
\begin{align}\label{eq:G_dynamics}
\W \tjplus= \underbrace{ \left(\left(\begin{bmatrix}
        \W \\ \N
    \end{bmatrix} \theta^{j}\right)^{\top}\otimes I_n  \right)}_{\mathcal{G}\left(\W, \N, \theta^{j}\right)}\vv{M} + w.   
\end{align}

From \eqref{eq:Theta} the invariance condition in \eqref{eq:inv_cond2} can be written as, 
for all $i=1,\ldots,n_p, \; j=1,\ldots, 2\sigma$, 
\begin{equation}\label{LMI Starting Point}
1 - (e_i^{\top}P\tjplus)^{2}\geq 0,\; \forall w \in \mathcal{W}, \; \forall \vv{M} \in \mathcal{M},
\end{equation} 
where  $e_i$ is the $i$-th column vector of the identity matrix $I_{n_p}$. 

We now multiply \eqref{LMI Starting Point} by a positive scalar variable $\bm{\phi}_{ij} >0$ and lower bound the left hand side by a term that is known to be non-negative for all $w \in \mathcal{W}, \; \vv{M} \in \mathcal{M}$ (S-procedure~\cite{cs97}). 
In this way, we obtain a sufficient condition for invariance as follows,
\begin{multline} \label{eq:relax_ineq}
\bm{\phi}_{ij}(1 - (e_i^{\top}P\tjplus)^2) \geq \\
2\left(\tjplus\right)^{\top}\!\underbrace{\left(\mathcal{G}(\W, \N, \theta^{j})\vv{M}+w-\W \tjplus \right)}_{0} \\ + 
\underbrace{\left( (\mathbf{1}+d)-Z\vv{M} \right)^{\top}\bm{\Lambda}_{ij}\left((\mathbf{1}-d)+Z\vv{M}\right)}_{\geq0} \\
+\underbrace{(\mathbf{1}+Dw)^{\top}\bm{\Gamma}_{ij}(\mathbf{1}-Dw)}_{\geq0},\vspace{-0.2cm}
\end{multline}
with 
$\bm{\Lambda}_{ij} \in \mathbb{D}^{Tn_w}_{+},  \bm{\Gamma}_{ij} \in \mathbb{D}^{n_w}_{+}$, being diagonal matrices having non-negative entries.
Based on \eqref{eq:G_dynamics} and the set definitions  $\mathcal{W}$, $\mathcal{M}$ in \eqref{eq:constr}, \eqref{eq:Feasible model set vector} respectively, it is straightforward to verify that the right hand side of \eqref{eq:relax_ineq} is nonnegative. 

A sufficient invariance condition is obtained by re-arranging \eqref{eq:relax_ineq} into the following quadratic form,
\begin{equation}\label{eq:quad_form}
\varkappa^{\top}\mathcal{P}_{ij}(\W,\N,\bm{\Lambda}_{ij},\bm{\Gamma}_{ij},\bm{\phi}_{ij})\varkappa\succcurlyeq 0,\;\forall \varkappa, 
\end{equation}
where $\varkappa^{\top}=\begin{bmatrix}
1 &\vv{M}^{\top} & w^{\top} &-(\tjplus)^{\top}
\end{bmatrix}$ and $\mathcal{P}_{ij}$ is a symmetric matrix given by,
\begin{equation}\label{eq:MI invariance condition}
    \begin{bmatrix}
  \bm{r}_{ij} & -d^{\top}\bm{\Lambda}_{ij}Z & \bm{0}  & \bm{0}\\ 
* &  Z^{\top}\bm{\Lambda}_{ij}Z   &\bm{0} & \mathcal{G}^{\top}\left(\W, \N, \theta^{j}\right)\\ 
* & * & D^{\top} \bm{\Gamma}_{ij} D \!& I_n\\
*&*&*&\W\!+\!\W^{\top} \!-\! \bm{\phi}_{i,j} P^{\top}e_ie^{\top}_i P
\end{bmatrix},
\end{equation}
where $\bm{r}_{ij}  \in \mathbb{R}$ is as given in \eqref{eq:rij}
and  $*$'s represent entries that are uniquely identifiable from symmetry. 
The invariance condition~\eqref{eq:inv_cond2}  holds if $\mathcal{P}_{ij}\succcurlyeq0$. The statement of  Theorem~\ref{thm:data-based invariance} thus follows. 
\end{proof}

\subsection{Dilated data-based LMI condition for invariance}

In this subsection, we derive a set of modified data-based LMI
conditions for invariance. These LMIs have additional matrix variables and are potentially less conservative than those introduced in Theorem~\ref{thm:data-based invariance}.

Let us introduce new matrix variables $\V_{ij} \in \mathbb{R}^{n \times n}$ and signals $\xi_{ij} = \V_{ij}^{-1} \W \tjplus$, for $i=1,\ldots, n_p$ and $j=1, \ldots, 2\sigma$. From the dynamics~\eqref{eq:G_dynamics} we obtain,
\begin{equation}
    \mathcal{G}\left(\W, \N, \theta^{j} \right)\vv{M} +w - \V_{ij} \xi_{ij} = 0.
\end{equation}
The sufficient condition in \eqref{eq:relax_ineq} is now expressed in the new introduced variables as follows:
\begin{multline} \label{eq:relax_ineq_dilated}
\bm{\phi}_{ij}(1 - (e_i^{\top}P\W^{-1} \V_{ij}\xi_{ij})^2) \geq \\
2\xi_{ij}^{\top}\!\underbrace{\left(\mathcal{G}(\W, \N, \theta^{j})\vv{M}+w- \V_{ij} \xi_{ij} \right)}_{0} \\ + 
\underbrace{\left( (\mathbf{1}+d)-Z\vv{M} \right)^{\top}\bm{\Lambda}_{ij}\left((\mathbf{1}-d)+Z\vv{M}\right)}_{\geq0} \\
+\underbrace{(\mathbf{1}+Dw)^{\top}\bm{\Gamma}_{ij}(\mathbf{1}-Dw)}_{\geq0}.\vspace{-0.2cm}
\end{multline}
Along the similar lines as described in the previous subsection,  a sufficient condition for invariance is obtained by re-arranging \eqref{eq:relax_ineq_dilated} into the following quadratic form:
\begin{equation}\label{eq:quad_form}
\varkappa^{\top}\mathcal{P}_{ij}(\W,\N,\bm{\Lambda}_{ij},\bm{\Gamma}_{ij},\bm{\phi}_{ij}, \V_{ij})\varkappa\succcurlyeq 0,\;\forall \varkappa, 
\end{equation}
where $\varkappa^{\top}=\begin{bmatrix}
1 &\vv{M}^{\top} & w^{\top} &-\xi_{ij}^{\top}
\end{bmatrix}$ and $\mathcal{P}_{ij}$ is a symmetric matrix. The invariance condition thus holds if $\mathcal{P}_{ij}\succcurlyeq0$, \emph{i.e.},
\begin{equation}\label{eq:LMI invariance condition dilated}
    \begin{bmatrix}
  \bm{r}_{ij} & -d^{\top}\bm{\Lambda}_{ij}Z & \bm{0}  & \bm{0}\\ 
* &  Z^{\top}\bm{\Lambda}_{ij}Z   &\bm{0} & \mathcal{G}^{\top}\left(\W, \N, \theta^{j}\right)\\ 
* & * & D^{\top} \bm{\Gamma}_{ij} D \!& I_n\\
*&*&*&\V_{ij}\!+\!\V_{ij}^{\top}\!-\!\V_{ij}^{\top}\mathcal{L}_{ij} \V_{ij}
\end{bmatrix} \succcurlyeq 0
\end{equation} 
where $\mathcal{L}_i \triangleq \bm{\phi}_{ij} \W^{-\top}P^{\top}e_{i}e_{i}^{\top}P \W^{-1}$ and  $\bm{r}_{ij}$, $\mathcal{G}(\W, \N, \theta^{j})$ are as defined in \eqref{eq:rij}, \eqref{eq:G} respectively. Note that the block $(4,4)$ in \eqref{eq:LMI invariance condition dilated} has a nonlinear dependence on $ \bm{\phi}_{i,j}, \V_{i,j} $ and $\W$, which will be resolved by introducing new matrix variables. 
We will now state the following dilated sufficient LMI conditions for invariance. 

\begin{theorem}[Dilated LMI conditions for invariance]\label{thm:data-based invariance dilated}
    Given data matrices $(X^+, X,U)$ and a fixed matrix $P \in \mathbb{R}^{n_p \times n}$, if there exists  $\W \in \mathbb{R}^{n \times n}$, $\N \in \mathbb{R}^{m \times n}$, and  variables 
$\{ \bm{\phi}_{ij} \in \mathbb{R}_{+},  \bm{\Lambda}_{ij} \in \mathbb{D}^{Tn_w}_{+}, \bm{\Gamma}_{ij} \in \mathbb{D}^{n_w}_{+}, \bm{X}_{ij}, \V_{i,j} \in \mathbb{R}^{n \times n}\}$ that satisfy the following LMIs for $i=1,\ldots, n_p$ and $j=1,\ldots, 2\sigma$,
\begin{equation}\label{eq:Dilated LMI 1}
    \begin{bmatrix}
         \W^{\top} + \W - \bm{X}_{ij} & \bm{\phi}_{ij} P^{\top}e_{i} \\
         \bm{\phi}_{ij} e_{i}^{\!\top}P   & \bm{\phi}_{ij}
    \end{bmatrix} {\succcurlyeq} 0.
\end{equation}
\begin{equation}\label{eq:Dilated LMI 2}
\begin{bmatrix}
  \bm{r}_{ij} & -d^{\top}\bm{\Lambda}_{ij}Z & \bm{0}  & \bm{0} & \bm{0}\\ 
* &  Z^{\top}\bm{\Lambda}_{ij}Z   &\bm{0} & \mathcal{G}^{\top}\left(\W, \N, \theta^{j}\right) & \bm{0} \\ 
* & * & D^{\top} \bm{\Gamma}_{ij} D \!& I_n & \bm{0}\\
*&*&*&\V_{ij}\!+\!\V_{ij}^{\top} & \V^{\top}_{ij} \\
*&*&*& * & \bm{X}_{ij}
\end{bmatrix} \succcurlyeq 0,
\end{equation}
where, $\bm{r}_{ij}$, $\mathcal{G}(\W, \N, \theta^{j})$ are as defined in \eqref{eq:rij}, \eqref{eq:G},
then, the state feedback controller gain is obtained as $\K = \N \W^{-1}$ which renders the set $\mathcal{C}$ in \eqref{eq:ninvset} robust invariant.  \end{theorem}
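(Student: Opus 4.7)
The strategy is to show that the pair of dilated conditions~\eqref{eq:Dilated LMI 1}--\eqref{eq:Dilated LMI 2} implies the earlier non-dilated LMI~\eqref{eq:LMI invariance condition dilated}, after which invariance of $\mathcal{C}$ follows by the same S-procedure argument used in the proof of Theorem~\ref{thm:data-based invariance}: positive semi-definiteness of the quadratic form associated with $\mathcal{P}_{ij}$ enforces $\bm{\phi}_{ij}(1-(e_{i}^{\top}P\tjplus)^{2})\geq 0$ at each vertex $\theta^{j}$ and every $(w,M)\in(\mathcal{W},\mathcal{M})$, which places $\tjplus$ inside $\Theta$, and convexity then propagates invariance to the entire set.

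To connect the two formulations, first take the Schur complement of~\eqref{eq:Dilated LMI 2} with respect to the block $\bm{X}_{ij}$ (which, as a diagonal block of a PSD matrix, satisfies $\bm{X}_{ij}\succeq 0$; one may assume $\bm{X}_{ij}\succ 0$ after an arbitrarily small perturbation of a strictly feasible solution). Because $\V_{ij}^{\top}$ is the only nonzero off-diagonal entry in the last block column, this reduction leaves all blocks of the resulting $4\times 4$ LMI unchanged except the (4,4) block, which becomes $\V_{ij}+\V_{ij}^{\top}-\V_{ij}^{\top}\bm{X}_{ij}^{-1}\V_{ij}$. Next, apply the Schur complement to~\eqref{eq:Dilated LMI 1} with respect to the scalar $\bm{\phi}_{ij}>0$ to obtain $\W^{\top}+\W-\bm{X}_{ij}\succeq \bm{\phi}_{ij}P^{\top}e_{i}e_{i}^{\top}P$; this in particular gives $\W^{\top}+\W\succ 0$, so $\W$ is invertible.

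The crux is the completion-of-squares identity $(\W-\bm{X}_{ij})^{\top}\bm{X}_{ij}^{-1}(\W-\bm{X}_{ij})=\W^{\top}\bm{X}_{ij}^{-1}\W-\W^{\top}-\W+\bm{X}_{ij}\succeq 0$, which, chained with the previous inequality, yields $\W^{\top}\bm{X}_{ij}^{-1}\W\succeq \bm{\phi}_{ij}P^{\top}e_{i}e_{i}^{\top}P$. A congruence transformation by $\W^{-1}$ then produces $\bm{X}_{ij}^{-1}\succeq \mathcal{L}_{ij}$, whence $\V_{ij}^{\top}\bm{X}_{ij}^{-1}\V_{ij}\succeq \V_{ij}^{\top}\mathcal{L}_{ij}\V_{ij}$. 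Replacing the (4,4) block of the reduced LMI from the first step by the larger quantity $\V_{ij}+\V_{ij}^{\top}-\V_{ij}^{\top}\mathcal{L}_{ij}\V_{ij}$ amounts to adding a PSD correction localized in a single diagonal block and thus preserves positive semi-definiteness. Hence~\eqref{eq:LMI invariance condition dilated} holds, and invariance follows exactly as in Theorem~\ref{thm:data-based invariance}, with $\K=\N\W^{-1}$ rendering $\mathcal{C}$ robust invariant.

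I expect the main obstacle to be spotting the completion-of-squares identity above: it is precisely what absorbs the rank-one, $\W^{-1}$-laden nonlinearity $\mathcal{L}_{ij}=\bm{\phi}_{ij}\W^{-\top}P^{\top}e_{i}e_{i}^{\top}P\W^{-1}$ into the affine slack $\bm{X}_{ij}$ while retaining a sufficient invariance certificate. The remaining manipulations are routine Schur-complement bookkeeping, subject only to the mild technical caveat that $\bm{X}_{ij}$ be taken strictly positive definite so that it (and consequently $\W$) can be inverted.
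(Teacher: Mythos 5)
Your proof is correct and follows essentially the same route as the paper's: the same Schur complements with respect to $\bm{X}_{ij}$ and $\bm{\phi}_{ij}$, the same completion-of-squares bound $\W^{\top}\bm{X}_{ij}^{-1}\W\succeq\W+\W^{\top}-\bm{X}_{ij}$, the congruence by $\W^{-1}$ yielding $\bm{X}_{ij}^{-1}\succeq\mathcal{L}_{ij}$, and the resulting enlargement of the $(4,4)$ block to recover~\eqref{eq:LMI invariance condition dilated}. You merely run the chain in the verification direction (from the LMIs to the invariance certificate) where the paper presents it as a derivation, and you share the paper's implicit assumptions that $\bm{X}_{ij}$ is symmetric positive definite and $\W$, $\V_{ij}$ invertible.
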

\begin{proof}
In order to resolve the non-linearity in the block $(4,4)$ of \eqref{eq:LMI invariance condition dilated}, let us introduce a new matrix variable $\bm{X}_{ij}= \bm{X}^{\top}_{ij} \succ 0$ such that
\begin{equation}\label{eq:Xi Original}
  \bm{X}_{ij}^{-1}\!-\!\mathcal{L}_{ij} {\succ} 0 \Leftrightarrow  \bm{X}_{ij}^{-1}\!-\!\bm{\phi}_{ij} \W^{-\top}P^{\top}e_{i}e_{i}^{\!\top}P \W^{-1}{\succ} 0. 
\end{equation} 
By applying Schur complement to \eqref{eq:Xi Original} we have,
\begin{equation}\label{eq:Xi Schur}
    \begin{bmatrix}
         \bm{X}^{-1}_{ij} & \bm{\phi}_{ij} \W^{-\top}P^{\top}e_{i} \\
         \bm{\phi}_{ij} e_{i}^{\!\top}P \W^{-1}  & \bm{\phi}_{ij}
    \end{bmatrix} {\succ} 0.
\end{equation}
Using congruence transformation matrix $\mathrm{diag}\{\W, I_n\}$, \eqref{eq:Xi Schur} can be rewritten as
\begin{equation}\label{eq:Xi lmi}
    \begin{bmatrix}
         \W^{\top}\bm{X}^{-1}_{ij}\W & \bm{\phi}_{ij} P^{\top}e_{i} \\
         \bm{\phi}_{ij} e_{i}^{\!\top}P   & \bm{\phi}_{ij}
    \end{bmatrix} {\succ} 0.
\end{equation}
In order to resolve the nonlinear dependence in the $(1,1)$ block of the left hand side matrix in \eqref{eq:Xi lmi}, we use,
\begin{multline}\label{eq:linearization W}
 \W^{\top}\bm{X}^{-1}_{ij}\W \!=\! (\W \! - \!  \bm{X}_{ij})^{\! \top} \bm{X}^{-1}_{ij}  (\W \! -\!   \bm{X}_{ij}) \! + \! \W + \W^{\!\top} \! - \!  \bm{X}_{ij} \\
  \succcurlyeq \W + \W^{\top} -  \bm{X}_{ij}
\end{multline}
From this inequality, replacing $\W^{\top}\bm{X}^{-1}_{ij}\W$ in \eqref{eq:Xi lmi} with $\W + \W^{\top} -  \bm{X}_{ij}$, leads to a sufficient LMI condition for \eqref{eq:Xi lmi} as in \eqref{eq:Dilated LMI 1}. Thus, proving the first LMI condition~\eqref{eq:Dilated LMI 1} stated in Theorem~\ref{thm:data-based invariance dilated}.

From \eqref{eq:Xi Original}, the condition \eqref{eq:LMI invariance condition dilated} can be rewritten as
\begin{equation}\label{eq:invariance condition dilated Xi}
    \begin{bmatrix}
  \bm{r}_{ij} & -\!d^{\top}\bm{\Lambda}_{ij}Z & \bm{0}  & \bm{0}\\ 
* &  Z^{\top}\bm{\Lambda}_{ij}Z   &\bm{0} & \mathcal{G}^{\!\top}\left(\!\W, \N, \theta^{j}\!\right)\\ 
* & * & D^{\top} \bm{\Gamma}_{ij} D \!& I_n\\
*&*&*&\V_{ij}\!+\!\V_{ij}^{\!\top}\!\!-\!\!\V_{ij}^{\!\top}\bm{X}^{-1}_{ij} \V_{ij}
\end{bmatrix} \!\!\succcurlyeq \!\!0,
\end{equation} 
which followed by Schur complement can be written as
\begin{equation}
    \begin{bmatrix}
  \bm{r}_{ij} & -d^{\top}\bm{\Lambda}_{ij}Z & \bm{0}  & \bm{0} & \bm{0}\\ 
* &  Z^{\top}\bm{\Lambda}_{ij}Z   &\bm{0} & \mathcal{G}^{\top}\left(\W, \N, \theta^{j}\right) & \bm{0} \\ 
* & * & D^{\top} \bm{\Gamma}_{ij} D \!& I_n & \bm{0}\\
*&*&*&\V_{ij}\!+\!\V_{ij}^{\top} & \V^{\top}_{ij} \\
*&*&*& * & \bm{X}_{ij}
\end{bmatrix} \! \succcurlyeq \! 0,
\end{equation} 
proving the second LMI condition \eqref{eq:Dilated LMI 2}. 
\end{proof}

\section{Computation of RCI set with volume maximization}\label{sec:SDP}

In this section, we present our algorithm  to compute RCI sets of desirably large size. To this end, we combine state, input constraints and LMI invariance conditions derived in the previous section in a semi-definite programming (SDP) problem to maximize the volume of the RCI set.  

\subsection{One-step algorithm:}
We note that the volume of the invariant set $\mathcal{C}$ in \eqref{eq:invset1} is proportional to the determinant $|\mathrm{det}(\W)|$~\cite{ag20}. Moreover, the RCI set is required to satisfy the state constraints \eqref{eq:Tractable State Constraints}, control input constraints \eqref{eq:Tractable Control Input Constraints} as well as data-based LMI conditions for invariance \eqref{eq:Dilated LMI 1}-\eqref{eq:Dilated LMI 2} (or \eqref{eq:DB invariance condition}). 
Under these constraints, we can easily formulate a determinant
maximization problem. Thus, Problem~\ref{prob:Problem Formulation} is feasible if the following SDP program has a feasible solution,

\textbf{Algorithm 1:}
\begin{equation}\label{eq:one-step SDP}
\begin{array}{lll}
\max & \mathrm{log}\,\mathrm{det}(\W) & \\
{\bm{Z}_{\mathrm{SDP}}} & \\
\text{subject to:} & \W = \W^{\top}, &\\
& 
 (\ref{eq:Tractable State Constraints}),  (\ref{eq:Tractable Control Input Constraints}), &  (\text{state-input constraints}) \\
 &   \eqref{eq:Dilated LMI 1}\!-\!\eqref{eq:Dilated LMI 2}\; (\text{or} \;  \eqref{eq:DB invariance condition}) & (\text{invariance LMIs})
\end{array}
\end{equation}
where the optimization variables are $\bm{Z}_{\mathrm{SDP}} \triangleq \left( \W,\N,\bm{X}_{ij},\V_{ij},\bm{\phi}_{ij},\bm{\Lambda}_{ij}, \bm{\Gamma}_{ij} \right)$ for $i=1,\ldots, n_p,  j=1,\ldots, 2\sigma$. 
The symmetry condition $\W = \W^{\top}$ is imposed to make the objective function $\mathrm{log}\,\mathrm{det}(\W)$ concave.
The above SDP is a very simple one-step procedure to compute RCI sets of desirably large size. 

\subsection{Iterative algorithm:}

We remark that instead of one-step solution of \eqref{eq:one-step SDP}, an \emph{iterative} volume maximization scheme can be applied to compute the RCI set.  In this approach, the SDP~\eqref{eq:one-step SDP} is solved with an iterative procedure such that the solution obtained at the $q$-th iteration is utilized in the problem to be solved at the $(q+1)$-th iteration in order to reduce conservatism. In such iterative scheme, the $\W$ is not required to be symmetric and the conservatism introduced due to the linearization \eqref{eq:linearization W} can be also be reduced.   

Let $W^q$ and ${X}^q_{ij}$ denote the values of the variables $\W$,  $\bm{X}_{ij}$ obtained at the $q$-th iteration.  
In order to ensure that at each iteration the volume of the RCI set increases, \emph{i.e.}, $|\mathrm{det}(W^{q+1})| \geq |\mathrm{det}(W^{q})|$, we impose the following,
\begin{equation}\label{eq:Wobj}
    \W^{\top} W^{q} + (W^{q})^{\top}\W -(W^{q})^{\top}W^{q} \succcurlyeq \W_{\mathrm{obj}} \succ 0,
\end{equation}
where $\W_{\mathrm{obj}} = \W^{\top}_{\mathrm{obj}} \in \mathbb{R}^{n \times n}$ is the new symmetric matrix variable. 

Moreover, the non-linearity \eqref{eq:linearization W} can be written as,
\begin{align}\label{eq:linearization W new}
 \W^{\top}\bm{X}^{-1}_{ij}\W \! \succcurlyeq \!
   \W^{\top} Z^{q}_{ij} \!+\! (Z^{q}_{ij})^{\top} \W 
   \!-\! (Z^{q}_{ij})^{\top} \bm{X}_{ij} Z^{q}_{ij},
\end{align}
where $Z^{q}_{ij} \triangleq ({X}^q_{ij})^{-1} W^{q}$.

Thus, the $(1,1)$-block in \eqref{eq:Dilated LMI 1} is replaced with the right hand side of \eqref{eq:linearization W new} at the $q$-th iteration as follows,
\begin{equation}\label{eq:Dilated LMI 1 new}
    \begin{bmatrix}
         \W^{\top} Z^{q}_{ij} + (Z^{q}_{ij})^{\top}\W - (Z^{q}_{ij})^{\top} \bm{X}_{ij} Z^{q}_{ij} & \bm{\phi}_{ij} P^{\top}e_{i} \\
         \bm{\phi}_{ij} e_{i}^{\!\top}P   & \bm{\phi}_{ij}
    \end{bmatrix} {\succcurlyeq} 0.
\end{equation}

For brevity, we omit the detailed proof of the iterative algorithm. The reader is referred to~\cite{gkf17} for the details.

The iterative algorithm is summarized as follows:

\textbf{Algorithm 2: $q$-th iteration:}
\begin{equation}\label{eq:iterative SDP}
\begin{array}{lll}
\max & \mathrm{log}\,\mathrm{det}(\W_{\mathrm{obj}}) & \\
{\bm{Z}_{\mathrm{SDP}}} & \\
\text{subject to:} & \eqref{eq:Wobj}, &\\
& 
 (\ref{eq:Tractable State Constraints}),  (\ref{eq:Tractable Control Input Constraints}), &  (\text{state-input constraints}) \\
 &  \eqref{eq:Dilated LMI 2} \ \eqref{eq:Dilated LMI 1 new}, \;  & (\text{invariance LMIs})  
\end{array}
\end{equation}
where the optimization variables are $\bm{Z}_{\mathrm{SDP}} \triangleq \left( \W,\N,\bm{X}_{ij},\V_{ij},\bm{\phi}_{ij},\bm{\Lambda}_{ij}, \bm{\Gamma}_{ij}, \W_{\mathrm{obj}} \right)$ for $i=1,\ldots, n_p,  j=1,\ldots, 2\sigma$.

\section{Numerical Example}\label{sec:example}
We demonstrate the effectiveness of the proposed approach via a numerical case study. All algorithms have been implemented in the \texttt{Python} environment using \texttt{cvxpy} package~\cite{diamond2016cvxpy} utilizing \texttt{MOSEK}~\cite{mosek}  to solve the SDP programs.   
\subsection*{Open-loop unstable system: Double integrator}
We consider an open-loop unstable double integrator system having  dynamics  described as in~\eqref{eq:system} with
\begin{equation}\label{eq:example}
\begin{bmatrix}
    x_{1}(k+1) \\ x_{2}(k+1)
\end{bmatrix} = 
\underbrace{\begin{bmatrix}
    1 & 1\\
    0 & 1
    \end{bmatrix}}_{A} \begin{bmatrix}
        x_{1}(k)\\x_{2}(k)
    \end{bmatrix}+ 
    \underbrace{\begin{bmatrix}
   0\\1
    \end{bmatrix}}_{B} u(k) + w(k).
\end{equation}
Note that the system matrices $(A,B)$ are \emph{unknown}, but they are only used to gather the data. A single state-input trajectory of $T=20$ samples is gathered by exciting the system~\eqref{eq:example} with inputs uniformly distributed in $ [-2, 2]$, see Fig.~\ref{fig:data}. The data satisfies the persistency of excitation rank conditions given in Proposition~\ref{prop:rich_data}. 
The disturbance $w$ acting on the systems is assumed to take values in the bound $[ -0.1,\ 0.1]$, \emph{i.e.}, $D=10$ according to the set definition $\mathcal{W}$ in \eqref{eq:constr}. The state constraints are $(x_1, x_2) \in [-2, 2] \times [-2, \ 2]$ and the input constraints are $u \in [-2, 2]$.   

\begin{figure}[h!]
\centering
    \includegraphics[width=\columnwidth]{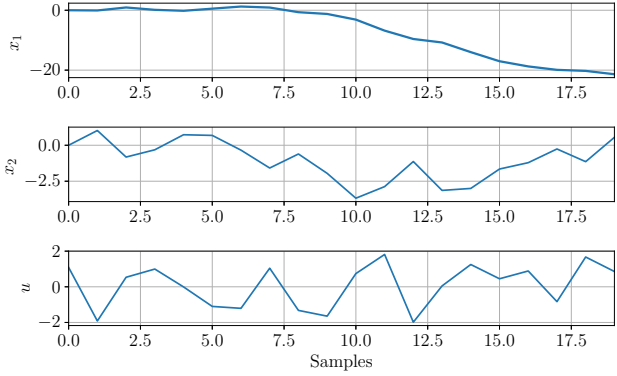}
    \caption{State-input data gathered from the double integrator system. }
    \label{fig:data}
\end{figure}

\subsubsection{Comparison between  data-driven approaches and a model-based method}

In this subsection, we compare the proposed data-driven algorithms with a model-based approach~\cite{gkf17}. In the model-based approach, exact values of the system matrices $(A,B)$ are assumed to be known. 
The complexity of the RCI sets is selected as $n_p = 3$ by choosing matrix $P$ as follows 
\begin{equation*}
    P = \begin{bmatrix}
        10 & 10 \\ 10 & 0 \\1 & 11
    \end{bmatrix}
\end{equation*}

\begin{figure}[t!]
	\centering
\begin{subfigure}[h]{\columnwidth}
\centering
	\includegraphics[width=0.75\columnwidth]{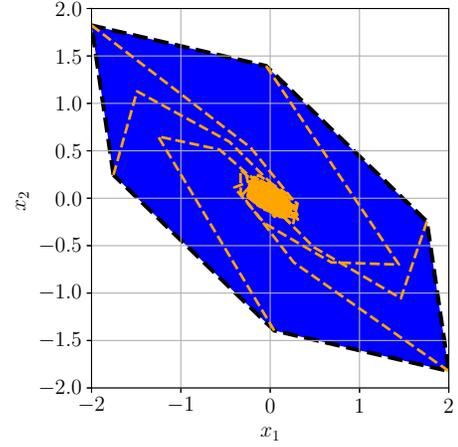}
	\caption{Data-driven approach: $1$-step algorithm~\eqref{eq:one-step SDP} }
	\label{fig:rci_data}
\end{subfigure}
\begin{subfigure}[h]{\columnwidth}
\centering
	\includegraphics[width=0.75\columnwidth]{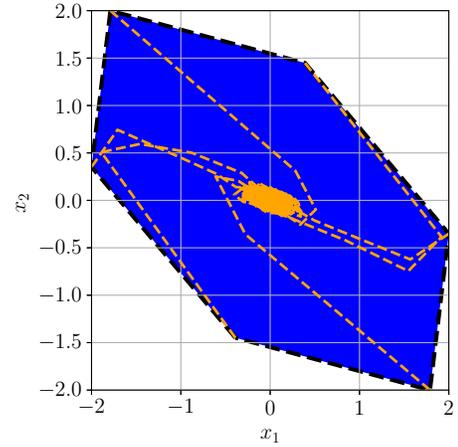}
	\caption{Data-driven iterative approach with dilated LMI~\eqref{eq:iterative SDP}}
	\label{fig:rci_data_dil}
\end{subfigure}
\begin{subfigure}[h]{\columnwidth}
\centering
	\includegraphics[width=0.75\columnwidth]{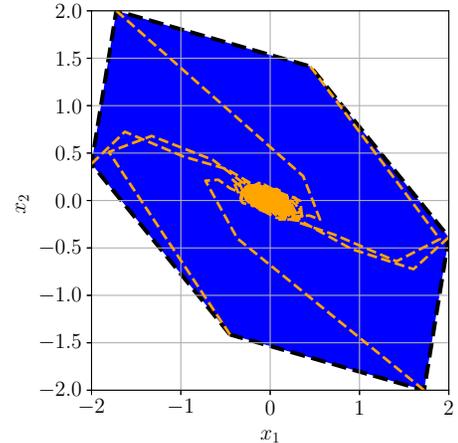}
	\caption{Model-based approach with dilated LMIs~\cite{gkf17}}
	\label{fig:rci_data_model}
\end{subfigure}
\caption{Direct data-driven approach \emph{vs} model-based approach: RCI sets (blue) with closed-loop state trajectories (dashed yellow).}
\label{fig:rcisets}
\end{figure}

\begin{figure}[h!]
\centering
    \includegraphics[width=\columnwidth]{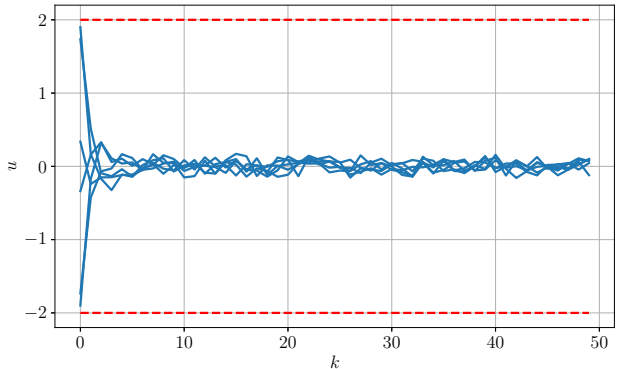}
    \caption{Closed-loop simulation: Control input $u = \K x$ trajectories for the computed state-feedback gain (blue) and input constraints (dashed-red).}
    \label{fig:input constrains}
\end{figure}

The RCI sets and the associated state-feedback control laws are computed by running one-step \textbf{Algorithm 1} solving~\eqref{eq:one-step SDP} and \textbf{Algorithm 2} solving~\eqref{eq:iterative SDP} iteratively for $5$ iterations with dilated LMI conditions. We also compute the RCI set and control law based on dilated LMI conditions given in the model-based method~\cite{gkf17}. The resulting RCI sets matrices and the state-feedback gains   are obtained as follows:
\begin{align*}
\left [ \begin{array}{cc} 
\W\\\hline
\K
\end{array} \right ] &=\left [ \begin{array}{cc} 
\!\!\;\;17.54  &  \;\;-2.46\!\!\\
\!\!-2.46  &     \;\;15.77\!\!\\\hline
\!\!   -0.71  & -1.45\!\!\\
\end{array} \right ],   \ \text{(Data-driven: $1$-step)} \\
\left [ \begin{array}{cc} 
\W\\\hline
\K
\end{array} \right ] &=\left [ \begin{array}{cc} 
\!\!\;\;20.00  &  \;\;2.11\!\!\\
\!\!-3.51  &     \;\;16.48\!\!\\\hline
\!\!   -0.38  & -1.21\!\!\\
\end{array} \right ],   \ \text{(Data-driven: iterative)} \\
\left [ \begin{array}{cc} 
\W\\\hline
\K
\end{array} \right ] &=\left [ \begin{array}{cc} 
\!\!\;\;20.00  &  \;\;2.77\!\!\\
\!\!-3.87  &     \;\;16.13\!\!\\\hline
\!\!   -0.41  & -1.18\!\!\\
\end{array} \right ] \ \text{(Model-based)}
\end{align*}

\begin{table}[t!]
\centering
\caption{Volume of the RCI set obtained with data-driven (DD) algorithms and model-based (MB) approach.} \label{tab:volume_methods}
\begin{tabular}{|c||c|c|c|}
\hline 
 & DD: $1$-step & DD: iterative & MB~\cite{gkf17} \\
 \hline
 Volume of $\mathcal{C}$ & 8.31& 9.86& 9.50 \\
 \hline
 \end{tabular}
\end{table}

The obtained RCI sets are depicted in Fig.~\ref{fig:rcisets}. It can be observed that the proposed direct data-driven approach is able to generate RCI sets which are of comparable volume to those obtained with the model-based method. The main advantage is that explicit knowledge of model matrices $(A,B)$ is not required, thus avoiding an additional identification step. 
The corresponding volumes of the RCI sets are reported in Table~\ref{tab:volume_methods}, which shows  that iterative \textbf{Algorithm~2}  with dilated LMI conditions generates relatively larger size RCI sets than those computed with the one-step \textbf{Algorithm~1}, which indicate that \textbf{Algorithm~2} is indeed less conservative for this example.

Furthermore, Fig.~\ref{fig:rcisets} also shows closed-loop state trajectories starting from each vertex of the RCI set. These trajectories are obtained by simulating the true system in  closed-loop with the state-feedback controller $u=\K x$. During the closed-loop simulation, a random disturbance uniformly distributed in the interval $[-0.1, 0.1 ]$  is acting on the system at each time instance. The figure shows that the approach guarantees robust invariance in the presence of a bounded but unknown disturbance while respecting the state-constraints.  The corresponding input trajectories computed with the iterative data-driven algorithm are shown in Fig.~\ref{fig:input constrains}. The figure  shows that the input constraints are also satisfied. 


\subsubsection{RCI sets with different complexities}

In this subsection, we analyse the effect of choosing different $P$ matrices corresponding to different complexities of polytope. The  RCI set and the associated state-feedback gain matrices are computed  running \textbf{Algorithm 1} and the computed matrices are as follows,
\begin{align}\label{rcisets_description}
&P_2=\begin{bmatrix}
1   &  0\\
0   &  1
\end{bmatrix},\quad \quad  \left [ \begin{array}{cc} 
\W_2\\\hline
\K_2
\end{array} \right ]=\left [ \begin{array}{cc} 
\!\!\;\;1.33  & -0.67\!\!\\
\!\!\;\;-0.67  &  \;\;1.17\!\!\\\hline
\!\!   -0.87  & -1.89\!\!\\
\end{array} \right ], \nonumber \\
&P_3\!=\!\begin{bmatrix}
 \!\;\;20\!\!  & \!\! \;\;20\!\\
 \! -20 \!\!&  \!\!  \;\;0\!\\
 \!\;\;0 \!\! &\!\! -25\!
\end{bmatrix},\left [ \begin{array}{cc} 
\W_3\\\hline
\K_3
\end{array} \right ]\!=\!\left [ \begin{array}{cc} 
35.32 & -5.85\\
   -5.85 &  \;\;33.11\\\hline
   -0.64 &  -1.52\\
\end{array} \right ], \nonumber \\
&P_4\!=\!\begin{bmatrix}
\!-18 \!\!&\!\!  -55\!\\
\!\;\; 18\!\! &\!\!  \;\; \,55\!\\
\!\;\;55 \!\!& \!\! -18\!\\
\!\;\;55 \!\!&\!\!   \;\;18\!\\
\end{bmatrix},\left [ \begin{array}{cc} 
\W_4\\\hline
\K_4
\end{array} \right ]\!\!=\!\!\left [ \begin{array}{cc} 
\!\!\;\;110.00  & -29.08\!\!\\
\!\!\;-29.08  & \;\; 77.01\!\!\\\hline
\!\!-0.56 &  -1.52\!\!
\end{array} \right ].
\end{align}

\begin{figure}[t!]
    \centering
    \includegraphics[width=0.75\columnwidth]{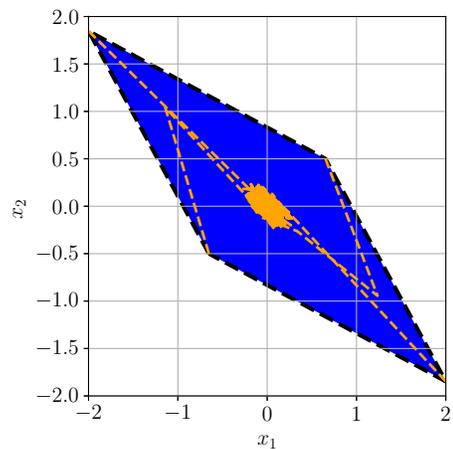}
    \includegraphics[width=0.75\columnwidth]{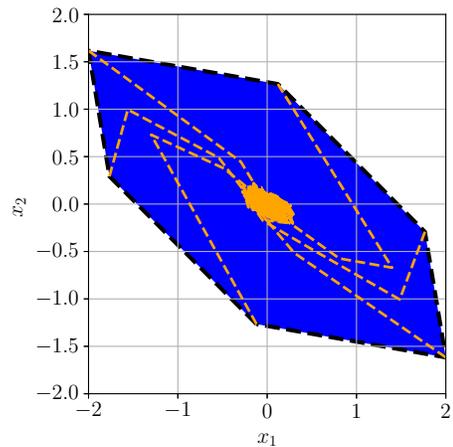}
    \includegraphics[width=0.75\columnwidth]{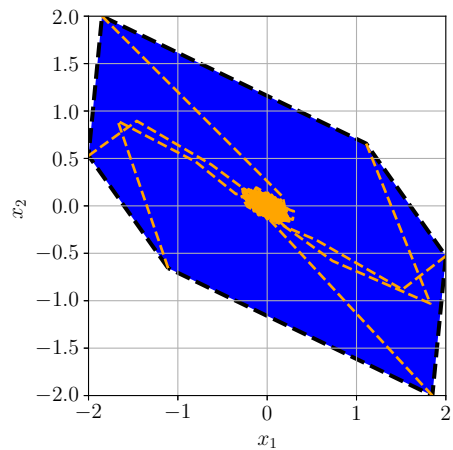}
    \caption{Maximum volume RCI sets $\mathcal{C}$ with different complexities: $n_p=2$ (top panel), $n_p=3$ (middle panel), $n_p=4$ (bottom panel).}
    \label{fig:rciset_P}
\end{figure}

\begin{table}[b!]
 \caption{Volumes of the RCI sets with different complexities} \label{tab:volume_complexity}
 \centering
\begin{tabular}{|c||c|c|c|}
\hline 
Complexity & $n_p=2$ & $n_p =3$ & $n_p=4$ \\
 \hline
 Volume of $\mathcal{C}$ & 4.19& 7.04& 8.29 \\
 \hline
 \end{tabular}
\end{table}

The subscripts in (\ref{rcisets_description}) indicate the set complexity $n_p$. In Fig.~\ref{fig:rciset_P} maximum volume RCI set with complexities $n_p=2,3,4$ are plotted along with the  closed-loop state trajectories obtained by simulating the true system from different initial conditions and randomly varying the disturbance within the chosen bounds. The corresponding volume of the RCI sets are reported in Table~\ref{tab:volume_complexity}.

As expected, it can be observed that as $n_p$ increases, size of the RCI set increases, thus $n_p$ can be used as an additional tuning parameter to obtain an invariant set with larger volume.


\section{CONCLUSIONS}
We proposed a direct data-driven approach  to compute a full complexity polytopic RCI set and an associated linear state-feedback control law.  In the proposed algorithm neither the model of the system is required to be known nor any identification step is necessary. The algorithm is robust w.r.t. a set of all feasible models compatible with the available state-input data and satisfying the disturbance bounds.  The  proposed direct data-driven approach is able to generate RCI sets with sizes that are comparable to that of an approach in which exact system knowledge is assumed. As a future work, the proposed approach can be extended to generate RCI sets and controllers for a more general class of systems, \emph{e.g.}, linear parameter-varying and non-linear systems.


\bibliographystyle{plain}
\bibliography{cdc23_ref}

\begin{thebibliography}{10}

\bibitem{abadir05}
K.~M. Abadir and J.~R. Magnus.
\newblock {\em Matrix Algebra}.
\newblock Econometric Exercises. Cambridge University Press, 2005.

\bibitem{berb20}
J.~Berberich, A.~Koch, C.~W. Scherer, and F.~Allgöwer.
\newblock Robust data-driven state-feedback design.
\newblock In {\em Proc. of the 2020 American Control Conference (ACC)}, pages
  1532--1538, Denver, CO, USA, 2020.

\bibitem{berb22}
J.~Berberich, C.~W. Scherer, and F.~Allgöwer.
\newblock Combining prior knowledge and data for robust controller design.
\newblock {\em IEEE Transactions on Automatic Control}, pages 1--16, 2022.

\bibitem{bis20}
A.~Bisoffi, C.~De~Persis, and P.~Tesi.
\newblock Data-based guarantees of set invariance properties.
\newblock In {\em Proc. of the 21st IFAC World Congress}, pages 3953--3958,
  Berlin, Germany, 2020.

\bibitem{bisoffi23}
A.~Bisoffi, C.~De~Persis, and P.~Tesi.
\newblock Controller design for robust invariance from noisy data.
\newblock {\em IEEE Transactions on Automatic Control}, 68(1):636--643, 2023.

\bibitem{fb99}
F.~Blanchini.
\newblock {Set invariance in control}.
\newblock {\em Automatica}, 35(11):1747--1767, 1999.

\bibitem{fb}
F.~Blanchini and S.~Miani.
\newblock {\em Set-Theoretic Methods in Control}.
\newblock Birkh{\"a}user, Boston, MA, 2015.

\bibitem{tb10}
T.~B. Blanco, M.~Cannon, and B.~{De Moor}.
\newblock {On efficient computation of low-complexity controlled invariant sets
  for uncertain linear systems}.
\newblock {\em International Journal of Control}, 83(7):1339--1346, July 2010.

\bibitem{chen22}
Y.~Chen and N.~Ozay.
\newblock Data-driven computation of robust control invariant sets with
  concurrent model selection.
\newblock {\em IEEE Transactions on Control Systems Technology},
  30(2):495--506, 2022.

\bibitem{tesi20}
C.~De~Persis and P.~Tesi.
\newblock Formulas for data-driven control: Stabilization, optimality, and
  robustness.
\newblock {\em IEEE Transactions on Automatic Control}, 65(3):909--924, 2020.

\bibitem{diamond2016cvxpy}
S.~Diamond and S.~Boyd.
\newblock {CVXPY}: {A} {P}ython-embedded modeling language for convex
  optimization.
\newblock {\em Journal of Machine Learning Research}, 17(83):1--5, 2016.

\bibitem{mf10}
M.~Fiacchini, T.~Alamo, and E.F. Camacho.
\newblock On the computation of convex robust control invariant sets for
  nonlinear systems.
\newblock {\em Automatica}, 46(8):1334--1338, 2010.

\bibitem{ag19a}
A.~Gupta and P.~Falcone.
\newblock Full-complexity characterization of control-invariant domains for
  systems with uncertain parameter dependence.
\newblock {\em IEEE Control System Letter}, 3(1):19--24, 2019.

\bibitem{gkf17}
A.~Gupta, H.~K{\"{o}}ro{\u{g}}lu, and P.~Falcone.
\newblock {Restricted-complexity characterization of control-invariant domains
  with application to lateral vehicle dynamics control}.
\newblock {\em Conference on Decision and Control}, pages 4956--4951, 2017.

\bibitem{ag20}
A.~Gupta, H.~Köroğlu, and P.~Falcone.
\newblock Computation of robust control invariant sets with predefined
  complexity for uncertain systems.
\newblock {\em International Journal of Robust and Nonlinear Control},
  31(5):1674--1688, 2021.

\bibitem{gmfp23auto}
A.~Gupta, M.~Mejari, P.~Falcone, and D.~Piga.
\newblock Computation of parameter dependent robust invariant sets for {LPV}
  models with guaranteed performance.
\newblock {\em Automatica}, 151:110920, 2023.

\bibitem{hou13}
Z.~S. Hou and Z.~Wang.
\newblock From model-based control to data-driven control: Survey,
  classification and perspective.
\newblock {\em Information Sciences}, 235:3--35, 2013.

\bibitem{lc15}
C.~Liu and I.~M. Jaimoukha.
\newblock {The computation of full-complexity polytopic robust control
  invariant sets}.
\newblock In {\em Proc. of the 54th IEEE Conf. on Decision and Control}, pages
  6233--6238, Osaka, Japan, 2015.

\bibitem{mosek}
MOSEK.
\newblock Mosek aps fusion api for python 10.0.40- academic license.
\newblock {\em https://docs.mosek.com/latest/pythonapi/index.html}, 2023.

\bibitem{sam22}
S.~Mulagaleti, A.~Bemporad, and M.~Zanon.
\newblock Data-driven synthesis of robust invariant sets and controllers.
\newblock {\em IEEE Control Systems Letters}, 6:1676--1681, 2022.

\bibitem{sr10}
S.~V. Rakovi{\'{c}} and M.~Baric.
\newblock {Parameterized Robust Control Invariant Sets for Linear Systems:
  Theoretical Advances and Computational Remarks}.
\newblock {\em IEEE Transactions on Automatic Control}, 55(7):1599--1614, Jul
  2010.

\bibitem{cs97}
C.~W. Scherer.
\newblock A full block {S}-procedure with applications.
\newblock In {\em Proc. of the 36th IEEE Conference on Decision and Control},
  volume~3, pages 2602--2607, San Diego, CA, USA, 1997.

\bibitem{Tahir2015}
F.~Tahir and I.~M. Jaimoukha.
\newblock Low-complexity polytopic invariant sets for linear systems subject to
  norm-bounded uncertainty.
\newblock {\em IEEE Transactions on Automatic Control}, 60(5):1416--1421, 2015.

\bibitem{zhong22}
B.~Zhong, M.~Zamani, and M.~Caccamo.
\newblock Synthesizing safety controllers for uncertain linear systems: A
  direct data-driven approach.
\newblock In {\em Proc. of the Conference on Control Technology and
  Applications (CCTA)}, pages 1278--1284, Trieste, Italy, 2022.

\end{thebibliography}

\end{document}